\newtheorem{theorem}{Theorem}[section]
\newtheorem{lemma}{Lemma}[section]
\newtheorem{corollary}{Corollary}[section]
\newtheorem{conjecture}{Conjecture}[section]
\theoremstyle{definition}
\newtheorem{definition}{Definition}[section]
\theoremstyle{remark}
\newtheorem{remark}{Remark}[section]
\setlist[enumerate]{nosep, topsep=1ex}
\setlist[itemize]{nosep, topsep=1ex}
\setlist[description]{nosep,topsep=1ex}
\patchcmd\algocf@Vline{\vrule}{\vrule \kern-0.4pt}{}{}
\patchcmd\algocf@Vsline{\vrule}{\vrule \kern-0.4pt}{}{}
\newcommand{\emptystring}{\varepsilon}
\newcommand{\dol}{{\rm \$}}
\newcommand{\hash}{\text{\#}}
\newcommand{\rhs}[1]{{\rm rhs}(#1)}
\newcommand{\rhsgen}[2]{{\rm rhs}_{#1}(#2)}
\renewcommand{\exp}[1]{{\rm exp}(#1)}
\newcommand{\expgen}[2]{{\rm exp}_{#1}(#2)}
\newcommand{\len}[1]{{\rm len}(#1)}
\newcommand{\lengen}[2]{{\rm len}_{#1}(#2)}
\newcommand{\bigO}{\mathcal{O}}
\newcommand{\dd}{\mathinner{.\,.}}
\newcommand{\probname}[1]{\text{\sc #1}}
\title{\textbf{Word Break on SLP-Compressed Texts}}
\author{Rajat De and Dominik Kempa\\[1.5ex]
  \small Department of Computer Science,\\[-0.2ex]
  \small Stony Brook University, Stony Brook, NY, USA\\[-0.2ex]
  \small \url{rde@cs.stonybrook.edu}, \url{kempa@cs.stonybrook.edu}\\
}
\date{}
\begin{document}

\maketitle
\vspace{3ex}

\begin{abstract}
  Word Break is a prototypical factorization problem in string
  processing: Given a word $w$ of length $N$ and a dictionary
  $\mathcal{D} = \{d_1, d_2, \ldots, d_{K}\}$ of $K$ strings,
  determine whether we can partition $w$ into words from
  $\mathcal{D}$.  We propose the first algorithm that solves the Word
  Break problem over the SLP-compressed input text $w$. Specifically,
  we show that, given the string $w$ represented using an SLP of size
  $g$, we can solve the Word Break problem in $\bigO(g \cdot
  m^{\omega} + M)$ time, where $m = \max_{i=1}^{K} |d_i|$, $M =
  \sum_{i=1}^{K} |d_i|$, and $\omega \geq 2$ is the matrix
  multiplication exponent. We obtain our algorithm as a simple
  corollary of a more general result: We show that in $\bigO(g \cdot
  m^{\omega} + M)$ time, we can \emph{index} the input text $w$ so
  that solving the Word Break problem for any of its substrings takes
  $\bigO(m^2 \log N)$ time (independent of the substring length). Our
  second contribution is a lower bound: We prove that, unless the
  Combinatorial $k$-Clique Conjecture fails, there is no combinatorial
  algorithm for Word Break on SLP-compressed strings running in
  $\bigO(g \cdot m^{2-\epsilon} + M)$ time for any $\epsilon > 0$.
\end{abstract}

\section{Introduction}\label{sec:intro}

Given a string $w \in \Sigma^{N}$ and a dictionary $\mathcal{D} =
\{d_1, d_2, \ldots, d_K\}$ of $K$ strings over alphabet $\Sigma$, the
\emph{Word Break} problem asks whether there exists a
\emph{factorization} (also called a \emph{parsing}) $w = f_1 f_2
\cdots f_{\ell}$ such that for every $i \in [1 \dd \ell]$, it holds
$f_i \in \mathcal{D}$.

The Word Break is a prototypical factorization problem, where the
dictionary is an arbitrary set specified as input.  Specializations of
this problem, where the dictionary is a set (usually infinite) of
words satisfying some specific conditions, constitute many central
problems in string processing. This includes the \emph{Lyndon
  factorization}~\cite{CFL58}, \emph{square
  factorization}~\cite{MatsuokaIBTM16}, \emph{palindromic
  factorizations}~\cite{BannaiGIKKPS18,FiciGKK14}, or the \emph{closed
  factorization}~\cite{BBGIIIPS16}. In data compression, on the other
hand, the dictionary is often computed dynamically during the parsing
process, e.g., in the \emph{LZ77}~\cite{LZ77},
\emph{LZ78}~\cite{LZ78}, \emph{LZW}~\cite{LZW}, or \emph{LZ-End
  factorization}~\cite{kreft2013compressing}, the \emph{prefix-free
  parsing}~\cite{BoucherGKLMM19}, or \emph{lexicographical
  parsing}~\cite{NavarroOP21}.  Nearly all of the above factorizations
can be computed in linear time
(e.g.,~\cite{MatsuokaIBTM16,BorozdinKRS17,BBGIIIPS16,KarkkainenKP13,KempaK17,BoucherGKLMM19,NavarroOP21}).
The Word Break problem, being a more
general factorization, appears to require more time: The fastest
algorithm runs in $\widetilde{\bigO}(M + |w| \cdot M^{1/3})$
time, where $M = \sum_{i=1}^{K}|d_i|$~\cite{fastwordbreak}.\footnote{The
$\widetilde{\bigO}(\cdot)$ notation hides factors polylogarithmic in
the input length $|w|$.}

The above algorithms are usually sufficient if the input text fits
into the main memory.  However, many moderns textual datasets (such as
DNA datasets or source code repositories) satisfy only the weaker
condition: their \emph{compressed representation} fits into RAM.  This
prompted researchers to seek algorithms to operate directly on the
input text in the compressed form. One of the most common compression
frameworks, employed in many previous studies on compressed algorithms
due to its generality and relative ease of use, is \emph{grammar
compression}~\cite{CharikarLLPPSS05}. In this method, we represent
the input text $w$ as a \emph{straight-line program (SLP)}: a
context-free grammar in Chomsky-normal form encoding only the input
text.
Algorithms operating on SLP-compressed
text are known for many of the above factorization problems, including
the Lyndon factorization~\cite{FuruyaNIIBT18}, the LZ77
factorization~\cite{KempaK23}, or the LZ78
factorization~\cite{BannaiGIT13}.  No prior studies, however,
addressed the general Word Break problem.  We thus ask:
\[
  \text{\emph{Can the Word Break problem be efficiently solved on an
      SLP-compressed text?}}
\]

We answer the above question positively and present an algorithm that,
given the SLP $G$ of $w$ and the dictionary $\mathcal{D}$, solves the
Word Break problem in $\bigO(|G| \cdot m^{\omega} + M)$ time, where
$\omega$ is the matrix multiplication exponent, $m = \max_{i=1}^{K}
|d_i|$, and $M$ as before is the total length of strings in
$\mathcal{D}$.  Unless the dictionary size dominates the text length,
this new algorithm is capable of up to exponential speedup compared to
algorithms running on uncompressed data.
We obtain our algorithm as a corollary of
a more general result: We show that in $\bigO(|G| \cdot m^{\omega} +
M)$ time we can \emph{index} the string $w$, so that Word Break on any
substring of $w$ can be solved in $\bigO(m^2 \log N)$ time,
independent of the substring length (\cref{th:index}).

Our second contribution is a lower bound: We prove that unless
the Combinatorial $k$-Clique Conjecture fails (see
\cref{sec:lower-bound}), there is no combinatorial algorithm for the
Word Break problem that, given an SLP $G$ encoding the input text $w$,
runs in $\bigO(|G| \cdot m^{2-\epsilon} + M)$ time, for any
$\epsilon>0$ (\cref{th:lower-bound}).

\section{Preliminaries}\label{sec:prelim}

\paragraph{Strings}

Let $w \in \Sigma^N$ be a \emph{string} of length $N$ over
\emph{alphabet} $\Sigma$. For simplicity, we assume $|\Sigma| =
\bigO(1)$. Strings are indexed from $1$ to $N$. For $1 \leq i \leq j
\leq N$, we denote substrings of $w$ as $w[i \dd j]$. We denote the
length of string $w$ as $|w|$. The concatenation of strings $u$ and
$v$ is written as $u\cdot v$ or $uv$, and the empty string is denoted
$\emptystring$.

Let $M \in \{0,1\}^{N \times N}$ be a \emph{matrix} of dimension $N
\times N$. Matrix rows and columns are indexed from $0$ to $N-1$, the
entry at the intersection of row $i$ and column $j$ is denoted by
$M[i,j]$. All matrices involved in this paper have boolean entries in
the $(\lor,\land)$ semi-ring, i.e., $(A \times B)[i,j] = \lor_{k}
A[i,k] \land B[k,j]$. Transpose of a vector $v$ is denoted by
$v^T$. The constant $\omega$ is such that the current best algorithm
for multiplying two $N \times N$ matrices takes $\bigO(N^{\omega})$
time. Recently $\omega < 2.371552$ was shown in \cite{currentomega}.

\vspace{-1.5ex}
\paragraph{Grammars}

A \emph{context-free grammar} (CFG) is a tuple $G \,{=}\, (V, \Sigma,
R, S)$, where:
\begin{itemize}
  \item $V$ is the finite set of \emph{nonterminals},
  \item $\Sigma$ is a finite set of \emph{terminal} symbols,
  \item $R \subseteq V \times (V \cup \Sigma)^*$ is a set of
    \emph{productions} or \emph{rules}, and
  \item $S \in V$ is the special \emph{starting nonterminal}.
\end{itemize}

We say that $u \in (V \cup \Sigma)^{+}$ \emph{derives} $v$, and write
$u \Rightarrow^* v$, if $v$ can be obtained from $u$ by repeatedly
replacing nonterminals according to the rule set $R$. The
\emph{language} of grammar $G$ is the set $L(G) := \{w \in \Sigma^*
\mid S \Rightarrow^* w\}$.

By a \emph{straight-line grammar (SLG)} we mean an CFG $G =
(V,\Sigma,R,S)$ satisfying $|L(G)| = 1$ and in which every nonterminal
occurs in some $\gamma$ derived from $S$. Note that in an SLG, for
every $v \in V$, there exists at most one production $(v,\gamma) \in
R$ with $v$ on the left side. The string $\gamma$ is called the
\emph{definition} of nonterminal $v$, and is denoted
$\rhsgen{G}{v}$. If $G$ is clear from the context, we simply write
$\rhs{v}$. Note also that for every $\alpha \in (V \cup \Sigma)^{*}$,
there exists exactly one string $\gamma \in \Sigma^{*}$ satisfying
$\alpha \Rightarrow^{*} \gamma$. Such $\gamma$ is called the
\emph{expansion} of $\alpha$, and is denoted $\expgen{G}{\alpha}$, or
$\exp{\alpha}$, if $G$ is clear. We also denote $\lengen{G}{\alpha} =
|\expgen{G}{\alpha}|$, or simply $\len{\alpha} = |\exp{\alpha}|$. We
define the size of an SLG as $|G| = \sum_{v \in V}|\rhs{v}|$.

An SLG in which for every $v \in V$, it holds $\rhs{v} = uw$, where
$u,w \in V$, or $\rhs{v} = a$, where $a \in \Sigma$, is called a
\emph{straight-line program (SLP)}. Every SLG
$G$ can be easily transformed into an
SLP $G'$, where $|G'| = \Theta(|G|)$ and $L(G') = L(G)$.

\section{Uncompressed Word Break}\label{sec:uncompressed}

\paragraph{Prior Work}

\probname{Word Break} is a textbook problem with many folklore dynamic
programming solutions. These solutions typically involve computing an array
with information about partitioning every prefix or suffix into words from
the dictionary. Given the input string $w$ of length $|w| = N$ and a dictionary
$\mathcal{D}$ of size $|\mathcal{D}| = K$, these approaches
yield algorithms with time complexity
$\bigO(M + Nm)$, $\bigO(M + NK)$, or $\bigO(M + NM^{\frac{1}{2}})$,
where $m = \max_{D \in \mathcal{D}} |D|$ and $M = \sum_{D \in \mathcal{D}} |D|$.

An algorithm running in $\widetilde{\bigO}(M + NM^{1/3})$ time is
presented in~\cite{fastwordbreak}. In the same paper, authors also
prove that, assuming the \emph{Combinatorial $k$-Clique
  Conjecture} (see \cref{sec:lower-bound}), there is no combinatorial
algorithm for the \probname{Word Break} problem that runs in $\bigO(M +
NM^{1/3-\epsilon})$ time, for any $\epsilon > 0$.

\vspace{-1.0ex}
\paragraph{A Folklore Algorithm}

Our algorithm in \cref{sec:compressed} uses 
a subroutine a modification of the following simple
algorithm running in $\bigO(M + Nm)$ time.

\vspace{1.5ex}
\begin{algorithm}[H]
  \caption{\texttt{WordBreak}$(w,\mathcal{D})$}
  Construct a trie $\mathcal{T}$ of strings in $\mathcal{D}$\\
  $F[0] \gets 1$\\
  \For{$i \gets 0$ \KwTo $N-1$}{
    \If{$F[i]=1$}{
      \For{$j \gets i+1$ \KwTo $\min(N,i+m)$}{\label{alg:check-D-1}
        \If{$w[i+1 \dd j]\in\mathcal{D}$}{\label{alg:check-D-2}
          $F[j] \gets 1$\\
        }
      }
    }
  }
  \Return $F[N]$
\end{algorithm}
\vspace{2ex}

To implement lines~\ref{alg:check-D-1}-\ref{alg:check-D-2} in $\bigO(m)$
time, we traverse the path from the root of $\mathcal{T}$, following
edges labeled with symbols of $w[i+1 \dd \min(N,i+m)]$.
The construction of $\mathcal{T}$ takes $O(M)$ time. Hence,
the total running time is $O(M + Nm)$.
	
Note that the algorithm computes the answer for every prefix of $w$,
i.e., for every $i \in [0 \dd N]$,
$F[i] = 1$ if and only if $w[1 \dd i]$ can be partitioned into
words from $\mathcal{D}$.

\section{Compressed Word Break}\label{sec:compressed}

For the duration of this section, we fix an SLP $G = (V,\Sigma,R,S)$
and a dictionary $\mathcal{D}$.
We denote the string generated by $G$ by $w$, i.e., $L(G) = \{w\}$,
and we denote $N = |w|$. As before, we also denote
$m = \max_{D \in \mathcal{D}} |D|$ and $M = \sum_{D \in \mathcal{D}}|D|$.

\begin{definition}\label{def:M}
  Let $v \in V$ be a nonterminal. We define a two-dimensional boolean
  matrix $M_v$ of size $(m+1) \times (m+1)$ such that, for every $0
  \leq i,j \leq m$, $M_v[i,j] = 1$ holds if and only if $i+j \leq
  \len{v}$ and the string $\exp{v}[i+1 \dd \len{v} - j]$ can be
  partitioned into words from dictionary $\mathcal{D}$.
\end{definition}

\begin{remark}\label{rm:M}
  Note that if $i+j = \len{v}$, then $M_v[i,j] = 1$, i.e., the empty
  string is assumed to be a decomposition of words from a dictionary
  $\mathcal{D}$.
\end{remark}

We will now show how to compute $M_v$ in a bottom-up manner.

\begin{definition}\label{def:T}
  For every $v \in V$ such that $\rhs{v} = ab$, where $a,b
  \in V$, we define a boolean matrix $T_{v}$ of size $(m+1) \times (m+1)$
  such that, for every $0 \leq i,j \leq m$, $T_v[i,j] = 1$
  holds if and only if $i \leq \len{a}$, $j \leq \len{b}$, and
  $\exp{a}[\len{a}-i+1 \dd \len{a}] \cdot \exp{b}[1 \dd j] \in
  \mathcal{D}$.
\end{definition}

\begin{lemma}\label{lm:T}
  $T_v$ can be computed in $\bigO(m^2)$ time.
\end{lemma}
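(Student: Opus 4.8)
The plan is to reduce the computation of $T_v$ to a constant number of pattern-matching queries against a precomputed structure built from the dictionary $\mathcal{D}$, so that each of the $(m+1)^2$ entries is filled in amortized $O(1)$ time. Fix $v$ with $\rhs{v}=ab$, and write $\alpha = \exp{a}$, $\beta = \exp{b}$, $p=\min(m,\len a)$, $q=\min(m,\len b)$. The entry $T_v[i,j]$ with $i\le p$, $j\le q$ asks whether the concatenation $\alpha[\len a - i + 1 \dd \len a]\cdot\beta[1\dd j]$ — a string of length $i+j$ formed by the last $i$ symbols of $\exp a$ followed by the first $j$ symbols of $\exp b$ — lies in $\mathcal{D}$. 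The key observation is that the only strings of $\exp a$ we ever need are the suffixes of $\exp a$ of length at most $m$, and symmetrically the prefixes of $\exp b$ of length at most $m$; call these short strings $s = \exp a[\len a - p + 1 \dd \len a]$ and $t = \exp b[1 \dd q]$ (each of length $\le m$). Then $T_v[i,j]=1$ iff $i\le p$, $j\le q$, and $s[p-i+1\dd p]\cdot t[1\dd j]\in\mathcal{D}$, so $T_v$ depends only on $s$, $t$, and $\mathcal{D}$. Extracting $s$ and $t$ from the SLP takes $O(m)$ time per nonterminal by a standard top-down traversal of the parse DAG (descending toward the right end for $a$, the left end for $b$), which is subsumed by the $O(m^2)$ budget; this can also be done once for all nonterminals as preprocessing.

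Next I would build, as a one-time $O(M)$ preprocessing step shared by all nonterminals, the Aho–Corasick automaton of $\mathcal{D}$ (equivalently, the trie $\mathcal{T}$ of \cref{alg:check-D-1} together with suffix links). To fill in $T_v$, first feed $s$ into the automaton, recording after each of the $p$ characters the current state; this is the list of states reached by the suffixes $s[k\dd p]$ for $k=1,\dots,p+1$ — but we actually need the state reached by each \emph{suffix} $s[p-i+1\dd p]$ read from the left, which the single left-to-right pass does not directly give. To handle this I would instead, for each start position $k\in[1\dd p+1]$ corresponding to suffix length $i=p-k+1$, run the automaton on $s[k\dd p]\cdot t$ — but doing this naively is $O(m)$ per $k$ and hence $O(m^2)$ total, which is exactly the budget, so it suffices: for each of the $O(m)$ suffixes of $s$ we traverse $s[k\dd p]$ (total $O(m^2)$) and then continue reading $t$ for up to $m$ more characters, and whenever the automaton passes through a state that is terminal (i.e., corresponds to a complete dictionary word), we are at some position $j$ in $t$, so we set $T_v[p-k+1,\,j]=1$. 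All remaining entries are $0$. Each of the $O(m)$ traversals costs $O(m)$ time, for $O(m^2)$ total; reading off terminal states along the way adds $O(1)$ per character since Aho–Corasick lets us test "is any suffix of the current text a dictionary word ending here" in $O(1)$ amortized via an "output link" to the nearest terminal ancestor, and we only need to mark the single longest such word per position (shorter dictionary words ending at the same $j$ correspond to larger $i$, i.e. shorter $s$-suffixes, and are caught on a later $k$-iteration — or, more carefully, we walk the output-link chain, which adds $O(m^2)$ total over the whole run since the number of distinct dictionary-word occurrences inside a length-$\le 2m$ window is $O(m)$ and each is charged once).

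The main obstacle is the bookkeeping in the last step: making sure that every pair $(i,j)$ with $s[p-i+1\dd p]t[1\dd j]\in\mathcal{D}$ is actually set to $1$ without spending more than $O(m)$ time per $s$-suffix. The subtlety is that a single run of the automaton on $s[k\dd p]\cdot t$ reports dictionary words by their \emph{end} position, and several dictionary words of different lengths may end at the same position $j$; only the ones whose length is exactly $i+j = (p-k+1)+j$ are relevant to the current $k$. The clean fix is to observe that for fixed $j$, as $k$ ranges over $[1\dd p+1]$ the required word length $i+j$ ranges over $[j\dd p+j]$, and the set of dictionary words that are suffixes of $s\cdot t[1\dd j]$ and have length in that range is exactly what the output-link chain from the state reached after reading $s\cdot t[1\dd j]$ enumerates; walking that chain once (not once per $k$) and, for each reported word of length $\ell$, setting $T_v[\ell-j,\,j]=1$ if $\ell - j \le p$, handles all $k$ simultaneously. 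Thus for each $j\in[0\dd q]$ we spend $O(m)$ in the worst case to read $s$ up to that point plus $O(\#\text{words reported})$, and summing over $j$ gives $O(m^2 + (\text{total reported occurrences}))=O(m^2)$, since at most $O(m)$ distinct dictionary-word occurrences lie in any window of length $2m$ and we process $O(m)$ windows — wait, more simply: the total number of (word, end-position) pairs over all $q+1$ runs is $O(m^2)$ because each run has length $\le 2m$ and reports $\le 2m$ occurrences. Hence the whole computation of $T_v$ is $O(m^2)$, as claimed.
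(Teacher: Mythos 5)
Your proposal is correct, and its core --- $O(m)$ pattern-matching traversals against a structure built once from $\mathcal{D}$, one traversal per suffix length of $\exp{a}$ --- is exactly the paper's argument. The difference is that the paper uses a plain trie of $\mathcal{D}$ rather than the Aho--Corasick automaton, and this choice dissolves the complication you spend most of your effort on: when you walk down a trie \emph{without failure links} from the root spelling $s[k \dd p]\cdot t$, reaching a terminal node at depth $d$ certifies that the entire prefix of length $d$ read so far is a dictionary word, so for start position $k$ (suffix length $i=p-k+1$) a terminal node at depth $i+j$ immediately gives $T_v[i,j]=1$ with no ambiguity about which of several words ends at position $j$; moreover the walk falls off the trie within $m$ steps since the trie has depth at most $m$, giving $O(m)$ per traversal and $O(m^2)$ total. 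Your ``clean fix'' --- a single pass over $s\cdot t$ reading off, at each end position $p+j$, the output-link chain and setting $T_v[\ell-j,j]$ for each reported word length $\ell$ --- is a valid alternative and also runs in $O(m^2)$ (the chain at each position has length at most $m$, one word per length), but note that your intermediate counting claims wobble: the number of dictionary-word occurrences in a window of length $2m$ is not $O(m)$ but can be $\Theta(m^2)$, and a single run does not report ``$\le 2m$ occurrences''; the bound that actually closes the argument is the per-position one ($\le m$ words ending at any fixed position), which you do eventually invoke. Your observation that only the length-$\le m$ suffix of $\exp{a}$ and length-$\le m$ prefix of $\exp{b}$ are needed, and that these can be extracted within budget, is a detail the paper leaves implicit.
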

\begin{proof}
  The procedure is a simple modification of the folklore algorithm
  presented in \cref{sec:uncompressed}.
  We first construct a trie of $\mathcal{D}$. For every $i$
  from $0$ to $\min(\len{a},m)$, we then perform a traversal from the root,
  following the path spelling $\exp{a}[\len{a}-i+1 \dd \len{a}] \cdot
  \exp{b}[1 \dd \min(\len{b},m)]$. During this traversal, we check
  at what points we encountered a word in $\mathcal{D}$, and mark
  those positions in $T_{v}$.
\end{proof}

\begin{lemma}\label{lm:matmul}
  Let $v \in V$ be such that $\rhs{v} = ab$, where $a,b \in V$. Then,
  $M_v$ can be written as the product of three boolean matrices. More
  precisely, $M_v = M_a \times T_v \times M_b$ over the $(\lor,\land)$
  semi-ring.
\end{lemma}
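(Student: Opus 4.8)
The claim is that $M_v = M_a \times T_v \times M_b$ over the $(\lor,\land)$ semiring, where $\rhs{v}=ab$. The plan is to prove the two inclusions separately by unfolding the matrix-product definition and matching it against the combinatorial meaning of $M_v$ from Definition \ref{def:M}. Write $P = M_a \times T_v \times M_b$; then $P[i,j] = \bigvee_{k,\ell} M_a[i,k]\land T_v[k,\ell] \land M_b[\ell,j]$. I want to show $P[i,j]=1$ iff $i+j\le\len{v}$ and $\exp{v}[i+1\dd\len{v}-j]$ factors over $\mathcal{D}$. Note $\len{v}=\len{a}+\len{b}$ and $\exp{v}=\exp{a}\exp{b}$.

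For the forward direction, suppose $P[i,j]=1$, witnessed by indices $k,\ell$ with $M_a[i,k]=T_v[k,\ell]=M_b[\ell,j]=1$. From $M_a[i,k]=1$ we get $i+k\le\len{a}$ and that $\exp{a}[i+1\dd\len{a}-k]$ factors over $\mathcal{D}$; from $M_b[\ell,j]=1$ we get $\ell+j\le\len{b}$ and $\exp{b}[\ell+1\dd\len{b}-j]$ factors; and from $T_v[k,\ell]=1$ we get $k\le\len{a}$, $\ell\le\len{b}$, and that the "bridging" word $\exp{a}[\len{a}-k+1\dd\len{a}]\cdot\exp{b}[1\dd\ell]$ lies in $\mathcal{D}$. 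Concatenating the three factorizations — the factorization of the $\exp{a}$-suffix, the single bridging dictionary word, and the factorization of the $\exp{b}$-prefix — yields a factorization of $\exp{a}[i+1\dd\len{a}]\cdot\exp{b}[1\dd\len{b}-j] = \exp{v}[i+1\dd\len{v}-j]$, and the index arithmetic gives $i+j\le\len{v}$. The one point needing care is the degenerate cases where $k=\len{a}$ (the bridging word lies entirely in $\exp{b}$) or $\ell=0$, or where one of the side strings is empty; Remark \ref{rm:M} (the empty string counts as a valid factorization) is exactly what makes these cases go through uniformly, so I would explicitly invoke it.

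For the reverse direction, suppose $i+j\le\len{v}$ and $s := \exp{v}[i+1\dd\len{v}-j]$ admits a factorization $s=f_1 f_2\cdots f_t$ with each $f_r\in\mathcal{D}$. The substring $s$ straddles the boundary between $\exp{a}$ and $\exp{b}$ at position $\len{a}$ (or lies entirely on one side). If it lies entirely in $\exp{a}$, take $k=\len{a}$ so $T_v[k,0]=1$ trivially (empty bridging word — here again Remark \ref{rm:M} / the convention is needed) and $M_b[0,j]=1$; symmetrically if it lies entirely in $\exp{b}$. Otherwise, there is a unique factor $f_{r}$ that contains the boundary position; let $k$ be chosen so that $f_1\cdots f_{r-1}$ ends exactly at the split point between the $\exp{a}$-part and $f_r$ — concretely, $f_1\cdots f_{r-1} = \exp{a}[i+1\dd\len{a}-k]$ for the appropriate $k$, and let $\ell$ be such that $f_r = \exp{a}[\len{a}-k+1\dd\len{a}]\cdot\exp{b}[1\dd\ell]$. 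Then $M_a[i,k]=1$ is witnessed by $f_1\cdots f_{r-1}$, $T_v[k,\ell]=1$ is witnessed by $f_r$, and $M_b[\ell,j]=1$ is witnessed by $f_{r+1}\cdots f_t = \exp{b}[\ell+1\dd\len{b}-j]$; hence $P[i,j]=1$. The main obstacle throughout is purely bookkeeping: making sure all the index shifts in Definitions \ref{def:M} and \ref{def:T} line up (the suffix-length convention for the first coordinate, prefix-length for the second), and handling the boundary/empty-string corner cases cleanly — there is no real difficulty beyond that, since associativity of $(\lor,\land)$-matrix multiplication means I may freely group $M_a\times T_v\times M_b$ as $(M_a\times T_v)\times M_b$ or $M_a\times(T_v\times M_b)$ without affecting the argument.
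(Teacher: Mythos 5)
Your overall strategy is exactly the paper's: unfold $(M_a\times T_v\times M_b)[i,j]$ into witnesses $k,\ell$, and translate each of $M_a[i,k]$, $T_v[k,\ell]$, $M_b[\ell,j]$ into its combinatorial meaning, in both directions. The forward direction (product implies factorization) is fine as you describe it, and needs no special convention beyond \cref{rm:M} for the two outer factors.

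There is, however, a genuine problem in your handling of the degenerate cases of the reverse direction, the very cases you rightly single out. When $s=\exp{v}[i+1\dd\len{v}-j]$ lies entirely inside $\exp{a}$, you propose taking $k=\len{a}$ and claim $T_v[k,0]=1$ ``trivially (empty bridging word).'' This is false: by \cref{def:T}, $T_v[\len{a},0]=1$ requires $\exp{a}[1\dd\len{a}]\cdot\exp{b}[1\dd 0]=\exp{a}\in\mathcal{D}$, and the empty-string convention of \cref{rm:M} applies only to the matrices $M_v$, never to $T_v$ (no entry of $T_v$ is ever set by an empty word, since $\mathcal{D}$ contains no empty word). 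Worse, if $s$ lies \emph{strictly} inside $\exp{a}$, i.e.\ $j>\len{b}$, then $M_b[\ell,j]=0$ for every $\ell$ (the constraint $\ell+j\le\len{b}$ is unsatisfiable), so the product is $0$ no matter how you choose $k$ and $\ell$; no choice of ``bridging word'' can rescue the argument. The correct treatment of the boundary-touching case (factorization boundary exactly at position $\len{a}$, or $s$ contained in one side with $j=\len{b}$ resp.\ $i=\len{a}$) is to take as the bridging word an actual factor of the factorization adjacent to the boundary --- e.g.\ the last factor $f_t=\exp{a}[\len{a}-k+1\dd\len{a}]$ with $\ell=0$ and $M_b[0,\len{b}]=1$ by \cref{rm:M} --- not an empty word. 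For what it is worth, the paper's own proof silently skips all of these cases by asserting that some dictionary word in the partitioning overlaps both $\exp{a}$ and $\exp{b}$; your instinct to isolate them was correct, but the fix you offer does not work, and the strict-containment case (which only arises when a child expands to fewer than $m$ symbols) cannot be fixed by choosing witnesses at all.
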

\begin{proof}
  Consider any $i,j \in [0 \dd m]$. We will show that $M_v[i,j] = 1$
  holds if and only if $(M_a \times T_v \times M_b)[i,j] = 1$.

  ($\Rightarrow$) Let us first assume that $M_v[i,j] = 1$. Then,
  it holds $i + j \leq \len{v}$ and
  $\exp{v}[i+1 \dd \len{v} - j]$ can be partitioned into words from
  $\mathcal{D}$. In this partitioning there exists a string from
  $\mathcal{D}$ that overlaps both $\exp{a}$ and $\exp{b}$, i.e., for
  some $k,\ell \in [0 \dd m]$, the concatenation of the last $k$
  characters in $\exp{a}$ and the first $\ell$ characters in $\exp{b}$
  is a member of $\mathcal{D}$. Thus, $T_v[k,\ell] = 1$. Moreover,
  $\exp{a}[i+1 \dd \len{a} - k]$ and $\exp{b}[\ell+1 \dd \len{b} - j]$
  can then be partitioned into words from $\mathcal{D}$. This implies
  that $M_a[i,k] = 1$ and $M_b[\ell,j] = 1$. We have thus proved that
  $(M_a \times T_v \times M_b)[i,j] = 1$.

  ($\Leftarrow$) Assume that $(M_a \times T_v \times M_b)[i,j] = 1
  $. Then, there exist $k, \ell \in [0 \dd m]$, such that $M_a[i,k] =
  1$, $T_v[k,\ell] = 1$, and $M_b[\ell,j] = 1$. By \cref{def:M}, the
  substrings $\exp{a}[i+1 \dd \len{a} - k]$ and $\exp{b}[\ell+1 \dd
  \len{b} - j]$ can be partitioned into words from $\mathcal{D}$. On
  the other hand, $T_v[k,\ell] = 1$ implies that $\exp{a}[\len{a} - k
  + 1 \dd \len{a}] \cdot \exp{b}[1 \dd \ell] \in \mathcal{D}$. It remains
  to note that by
  $\exp{v} = \exp{a} \cdot \exp{b}$, we have $\exp{a}[i+1
  \dd \len{a} - k] \cdot \exp{a}[\len{a} - k + 1 \dd \len{a}] \cdot
  \exp{b}[1 \dd \ell] \cdot \exp{b}[\ell+1 \dd \len{b} - j] =
  \exp{v}[i+1 \dd \len{v} - j]$. Therefore, $\exp{v}[i+1 \dd \len{v} -
  j]$ can be partitioned into words from $\mathcal{D}$, and hence
  $M_v[i,j] = 1$.
\end{proof}

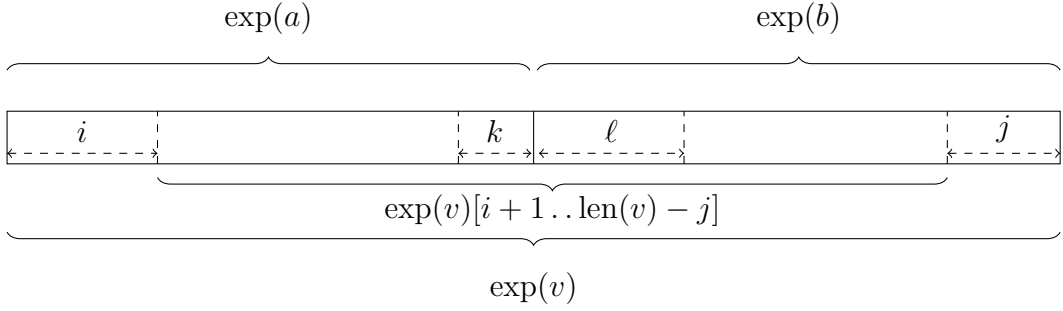
\begin{figure}
  \centering
  \begin{tikzpicture}[yscale=0.7,xscale=1]
    \draw [decorate,decoration={brace,amplitude=5pt,raise=3ex}] (0,1) -- (6.95,1) node[midway,yshift=3em]{$\exp{a}$};
    \draw [decorate,decoration={brace,amplitude=5pt,raise=3ex}] (7.05,1) -- (14,1) node[midway,yshift=3em]{$\exp{b}$};
    \draw (0,0) rectangle (14,1);
    \draw (7,0) -- (7,1);
    \draw[dashed] (2,0)  -- (2,1);
    \draw[dashed] (12.5,0) -- (12.5,1);
    \draw[dashed] (6,0)  -- (6,1);
    \draw[dashed] (9,0)  -- (9,1);
    \draw[dashed,<->] (0,0.2)  -- node[above]{$i$} (2,0.2);
    \draw[dashed,<->] (6,0.2)  -- node[above]{$k$} (6.97,0.2);
    \draw[dashed,<->] (7.07,0.2)  -- node[above]{$\ell$} (9,0.2);
    \draw[dashed,<->] (12.5,0.2) -- node[above]{$j$} (14,0.2);
    \draw [decorate,decoration={brace,amplitude=5pt,mirror,raise=1ex}]
      (2,0) -- (12.5,0) node[midway,yshift=-1.5em]{$\exp{v}[i+1 \dd \len{v} - j]$};
    \draw [decorate,decoration={brace,amplitude=5pt,mirror,raise=5ex}]
      (0,0) -- (14,0) node[midway,yshift=-4em]{$\exp{v}$};
  \end{tikzpicture}
  \caption{Illustration of the proof of \cref{lm:matmul}: $M_v[i,j] =
    1$ holds if and only if there exist $k,\ell$ such that $M_{a}[i,k]
    = 1$, $T_{v}[k,\ell] = 1$, and $M_{b}[\ell,j] = 1$.}\label{fig:matmul}
\end{figure}

\begin{theorem}\label{th:index}
  An SLP $G = (V,\Sigma,R,S)$ that generates a string $w \in \Sigma^N$
  along with a dictionary $\mathcal{D} = \{d_1,d_2,\ldots,d_K\}$ with
  $\sum_{i=1}^{K}|d_i| = M$ and $\max_{i=1}^{K}|d_i| = m$ can be
  preprocessed in $\bigO(|G|m^{\omega} + M)$ time to enable solving
  \probname{Word Break} for substrings of $w$ in $\bigO(m^2 \log N)$
  time.
\end{theorem}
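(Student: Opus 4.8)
The plan is to build the index by computing, in a single bottom-up pass over the SLP, the matrix $M_v$ of \cref{def:M} for every nonterminal $v \in V$, and then to answer a query for an arbitrary substring $w[\ell \dd r]$ by decomposing it into $\bigO(\log N)$ SLP-induced pieces and combining their matrices with the same $(\lor,\land)$ matrix product used in \cref{lm:matmul}.

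First, for the preprocessing, I would construct a single trie of $\mathcal{D}$ once, in $\bigO(M)$ time; this is shared by all calls to the subroutine of \cref{lm:T}. Then I would process the nonterminals of $V$ in an order consistent with the grammar DAG (children before parents). For a terminal rule $\rhs{v}=a$ with $a\in\Sigma$, $M_v$ is the constant-size $2\times 2$-supported matrix determined directly from \cref{def:M} (checking only whether $a\in\mathcal{D}$), computable in $\bigO(m^2)$ time trivially. For a binary rule $\rhs{v}=ab$, I first compute $T_v$ in $\bigO(m^2)$ time by \cref{lm:T}, and then set $M_v = M_a \times T_v \times M_b$ by \cref{lm:matmul}; the two matrix products cost $\bigO(m^{\omega})$ each. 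Summed over all $|V| = \Theta(|G|)$ nonterminals, preprocessing takes $\bigO(|G|\,m^{\omega} + M)$ time, as claimed. (A minor point to verify: the subroutine of \cref{lm:T} needs random access to $\exp{a}$ and $\exp{b}$ near their boundaries; since only the last $\min(\len{a},m)$ symbols of $\exp{a}$ and first $\min(\len{b},m)$ symbols of $\exp{b}$ are needed, these can be read off in $\bigO(m)$ time by walking down the grammar from $a$ and $b$, which is absorbed into the $\bigO(|G|m^{\omega})$ bound.)

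For the query, given a substring $u = w[\ell \dd r]$, I use the standard fact that an SLP of size $|G|$ decomposing a string of length $N$ admits, for any interval $[\ell \dd r]$, a decomposition $u = x_1 x_2 \cdots x_t$ with $t = \bigO(\log N)$, where each $x_s$ is the expansion of some nonterminal $v_s$ appearing in $G$ (this is obtained by walking down the derivation tree from $S$ and collecting the maximal subtrees fully contained in $[\ell \dd r]$; it is the same mechanism used in substring extraction and in LCE/fingerprint queries on SLPs). I then want $M_u[0,0]$, i.e.\ whether $u$ itself factors into dictionary words. The combination is exactly the associative product: defining the "glue" matrix $T_{s}$ between consecutive pieces $x_s$ and $x_{s+1}$ analogously to \cref{def:T} (last $k$ symbols of $x_s$ concatenated with first $\ell$ symbols of $x_{s+1}$ lies in $\mathcal{D}$), each $T_s$ is computed in $\bigO(m^2)$ time via \cref{lm:T}, and by the same inductive argument as in \cref{lm:matmul},
\[
  M_u \;=\; M_{v_1} \times T_{1} \times M_{v_2} \times T_{2} \times \cdots \times T_{t-1} \times M_{v_t},
\]
which is $\bigO(t) = \bigO(\log N)$ matrix products, each costing $\bigO(m^{\omega})$, for a query time of $\bigO(m^{\omega}\log N)$. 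To bring this down to the stated $\bigO(m^2\log N)$, I observe that we never need the full matrix product: it suffices to maintain a single row vector, starting from the $0$-th row of $M_{v_1}$ and repeatedly right-multiplying by $T_s$ and then by $M_{v_{s+1}}$. Each such vector-times-matrix operation costs $\bigO(m^2)$, there are $\bigO(\log N)$ of them, and the $(0,0)$ entry of the final vector is the answer; computing the $\bigO(\log N)$ glue matrices $T_s$ also costs $\bigO(m^2)$ each.

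The main obstacle is the query-time decomposition step: one must argue carefully that the $\bigO(\log N)$ pieces are genuinely expansions of grammar nonterminals (so that their precomputed $M_{v_s}$ are available), and that the boundary symbols needed for each glue matrix $T_s$ — the last $\bigO(m)$ symbols of $x_s$ and first $\bigO(m)$ of $x_{s+1}$ — can be extracted within the query budget. Both follow from standard SLP navigation: augmenting $G$ with subtree sizes $\len{v}$ during preprocessing lets us locate the $\bigO(\log N)$ covering nonterminals in $\bigO(\log N)$ time, and lets us extract any length-$\bigO(m)$ boundary window in $\bigO(m + \log N)$ time, which fits in the $\bigO(m^2\log N)$ bound. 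Everything else is a direct reuse of \cref{lm:T} and the associativity argument underlying \cref{lm:matmul}.
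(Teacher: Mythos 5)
Your proposal follows the paper's construction almost exactly: a shared trie for $\mathcal{D}$, bottom-up computation of $M_v$ via \cref{lm:T} and \cref{lm:matmul}, an $\bigO(\log N)$-piece decomposition of the query substring, and vector--matrix propagation to stay within $\bigO(m^2)$ per step. Your one genuine deviation --- computing the glue matrices at query time from the $\bigO(m)$ boundary characters of consecutive pieces, rather than reusing the precomputed $T_{u}$ of a common ancestor $u$ of $v_t$ and $v_{t+1}$ as the paper does --- is a legitimate alternative and fits the time budget.

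However, there is a genuine gap in the query-time decomposition. You assert as a ``standard fact'' that for \emph{any} SLP, collecting the maximal derivation-tree subtrees contained in $[\ell \dd r]$ yields $t = \bigO(\log N)$ pieces. This is false: that procedure yields $\bigO(h)$ pieces, where $h$ is the height of the derivation tree, and an unbalanced SLP can have $h = \Theta(|G|)$ while $N = \Theta(|G|)$, so $h$ can be exponentially larger than $\log N$. The same issue affects your claimed $\bigO(m + \log N)$ bound for extracting boundary windows. The paper closes this hole with an explicit first preprocessing step: applying the SLP balancing of Ganardi, Je\.z, and Lohrey to transform $G$ into an equivalent SLP of height $\bigO(\log N)$ and size $\bigO(|G|)$, after which your ``walk down and collect maximal subtrees'' argument (and the boundary extraction) goes through verbatim. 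Without invoking such a balancing result (or an equivalent device such as an AVL grammar), your query bound is only $\bigO(m^2 \cdot h)$, not $\bigO(m^2 \log N)$. Everything else in your argument matches the paper's proof.
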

\begin{proof}
  The preprocessing proceeds as follows:
  \begin{enumerate}[leftmargin=1cm]
  \item In $\bigO(|G|)$ time we apply the SLP balancing from
    \cite{balancingslp} to ensure that $G$ has height $\bigO(\log N)$.
    This increases the size of $G$ only by a constant factor.
  \item We preprocess the dictionary $\mathcal{D}$ by inserting every
    word into a trie to enable efficient membership queries. This
    takes $\bigO(M)$ time.
  \item For every $v \in V$, we compute and store the matrix $T_v$
    using \cref{lm:T}. In total, this takes $\bigO(|G|m^2)$
    time.
  \item For every $v \in V$, we compute the matrix $M_v$. To this end,
    we process all $v$ in an increasing order of size (or any
    topological order of the corresponding SLP DAG). During this
    computation, we consider two cases:
    \begin{itemize}
    \item If $\rhs{v} = a$, where $a \in \Sigma$, we set $M_v[i,j] = 0$
      for every $i+j \neq 1$ and $M_v[i,j] = 1$ for $i+j=1$.
      Furthermore, if $a \in \mathcal{D}$, we also set $M_v[0,0] = 1$.
    \item Otherwise, we apply \cref{lm:matmul} to compute $M_v$ in
      $\bigO(m^{\omega})$ time
    \end{itemize}
    Over all $v \in V$, the computation of $M_v$ takes
    $\bigO(|G|m^{\omega})$ time.
  \end{enumerate}
  In total, the preprocessing takes $\bigO(M + |G|m^{\omega})$ time.
  
  To solve \probname{Word Break} for any substring $w[i \dd j]$, we
  first compute a sequence of $p = \bigO(\log N)$ nonterminals
  $v_{1},\dots,v_{p}$ that satisfy $\exp{v_1} \cdot
  \exp{v_2} \cdot \ldots \cdot \exp{v_p} = w[i \dd j]$. During
  this decomposition, we additionally compute a sequence $u_1,
  \dots, u_{p-1}$ such that, for every $t \in [1 \dd p-1]$,
  $u_{t}$ is some common ancestor of nodes $v_{t}$ and
  $v_{t+1}$. Then, to solve \probname{Word Break} for $w[i \dd
  j]$, we evaluate the following product:
  \[
    [1,0,0,\cdots,0] \times M_{v_1} \times T_{u_1} \times
    M_{v_2} \times \cdots \times T_{u_{p-1}} \times
    M_{v_p} \times [1,0,\cdots,0]^T,
  \]
  where the two boundary vectors have dimension $m+1$.  The
  multiplications can be carried out in $\bigO(m^2)$ time by
  multiplying matrices with vectors in a left-to-right or
  right-to-left order. This yields a total running time of $\bigO(m^2
  \log N)$ per query.
\end{proof}

The above construction immediately implies the following result.

\begin{corollary}\label{cor:word-break}
  Given a dictionary $\mathcal{D} = \{d_1,d_2,\ldots,d_K\}$ and an SLP
  $G = (V,\Sigma,R,S)$ generating a string $w \in \Sigma^N$, the
  \probname{Word Break} on $(w,\mathcal{D})$ can be solved in $\bigO(M
  + |G|m^{\omega})$ time, where $M = \sum_{i=1}^{K}|d_i|$ and
  $m = \max_{i=1}^{K}|d_i|$.
\end{corollary}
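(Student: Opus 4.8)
The plan is to run the preprocessing of \cref{th:index} and then read the answer off a single matrix entry, so that no separate query phase is even needed. Concretely, I would first build a trie of $\mathcal{D}$ in $\bigO(M)$ time to support membership queries, then compute $T_v$ for every $v \in V$ via \cref{lm:T}, costing $\bigO(m^2)$ each and $\bigO(|G|m^2)$ in total. Finally, I would compute $M_v$ for every $v \in V$ by processing the nonterminals in a topological order of the SLP DAG: for a terminal rule $\rhs{v}=a$ I fill $M_v$ directly as described in the proof of \cref{th:index} (setting $M_v[i,j]=1$ exactly when $i+j=1$, and additionally $M_v[0,0]=1$ if $a \in \mathcal{D}$), and for a rule $\rhs{v}=ab$ I set $M_v = M_a \times T_v \times M_b$ over the $(\lor,\land)$ semi-ring, which is correct by \cref{lm:matmul} and costs $\bigO(m^{\omega})$ per nonterminal, hence $\bigO(|G|m^{\omega})$ overall.

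Once $M_S$ has been computed, I would apply \cref{def:M} with $v=S$ and $i=j=0$: by definition $M_S[0,0]=1$ holds iff $0 \leq \len{S}$ and $\exp{S}[1 \dd \len{S}]$ can be partitioned into words from $\mathcal{D}$. Since $\exp{S}=w$, this is precisely the \probname{Word Break} answer for $(w,\mathcal{D})$, so the algorithm simply returns $M_S[0,0]$. Adding the three phases gives running time $\bigO(M + |G|m^2 + |G|m^{\omega}) = \bigO(M + |G|m^{\omega})$, where the middle term is absorbed because $\omega \geq 2$ and $m \geq 1$.

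I do not expect a real obstacle here, as the statement is an immediate consequence of the machinery already in place; only two points need a sentence of care. First, the SLP-balancing step of \cref{th:index} is \emph{not} needed for a whole-string query, since computing all the matrices $M_v$ in topological order does not rely on the grammar having small height; it is used there solely to keep substring queries fast. Second, one should note the trivial arithmetic that the $\bigO(|G|m^2)$ cost of the $T_v$'s is dominated by $\bigO(|G|m^{\omega})$. An alternative presentation would literally invoke \cref{th:index} on the substring $w[1 \dd N]$, but then one additionally has to verify that the $\bigO(m^2 \log N)$ query time is $\bigO(|G|m^{\omega})$ (using $\log N = \bigO(|G|)$ together with $\omega \geq 2$); the direct argument via the single entry $M_S[0,0]$ is cleaner, and that is the one I would write out.
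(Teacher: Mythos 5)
Your proposal is correct and matches the paper's intent exactly: the paper states the corollary as an immediate consequence of the construction in \cref{th:index}, and reading off $M_S[0,0]$ after the preprocessing (with the correct observation that the balancing step is unnecessary for a whole-string query) is precisely that argument. The runtime accounting is also right.
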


\section{Lower Bound}\label{sec:lower-bound}

\paragraph{Overview}

In this section, we prove a conditional lower bound for the compressed
\probname{Word Break} problem. We will use the following
conjecture.

\begin{conjecture}[Combinatorial $k$-Clique Conjecture]\label{kclique-conjecture}
  Consider any constant $k \geq 1$. There is no combinatorial
  algorithm that, given an $n$-vertex undirected graph $\mathcal{G}$, determines
  if $\mathcal{G}$ contains a $4k$-clique in $\bigO(n^{4k-\epsilon})$ time, where
  $\epsilon > 0$.
\end{conjecture}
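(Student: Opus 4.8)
This statement is a \emph{conjecture} imported from fine-grained complexity, not a result the paper proves, so in the literal sense there is no proof to give --- and an unconditional proof would be a major breakthrough, since it would in particular rule out truly subcubic \emph{combinatorial} Boolean matrix multiplication (equivalently, combinatorial triangle detection in $\bigO(n^{3-\epsilon})$ time), a problem open for decades. The only meaningful ``proof proposal'' is therefore to justify that the precise wording used here --- hardness of detecting a $4k$-clique in $\bigO(n^{4k-\epsilon})$ time --- is \emph{without loss of generality} equivalent to the standard formulation of combinatorial clique hardness, so that adopting it as a hypothesis is no stronger than what is already widely assumed and used for lower bounds. The plan is to (i) recall the standard form, (ii) show the $4k$-clique wording is implied by it, and (iii) sketch the converse, yielding equivalence.

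First I would recall the standard Combinatorial $k$-Clique Conjecture: for every constant $k \geq 3$, no combinatorial algorithm decides whether an $n$-vertex graph contains a $k$-clique in $\bigO(n^{k-\epsilon})$ time; the case $k = 3$ is the conjectured combinatorial hardness of triangle detection, itself equivalent to combinatorial Boolean matrix multiplication. Instantiating the standard conjecture with the constant $4k$ --- note $4k \geq 4 \geq 3$ for every $k \geq 1$ --- immediately yields \cref{kclique-conjecture}: a combinatorial $\bigO(n^{4k-\epsilon})$-time $4k$-clique algorithm would contradict the standard conjecture at parameter $4k$. Hence \cref{kclique-conjecture} is implied by the standard conjecture, so assuming it costs nothing beyond what is already taken for granted in the combinatorial lower-bound literature.

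For the converse --- which shows that restricting the clique size to multiples of $4$ loses no generality --- I would invoke the classical clique self-reduction. To detect an $r$-clique in an $n$-vertex graph $\mathcal{G}$ using a $t$-clique subroutine with $t \mid r$, set $q = r/t$ and build an auxiliary graph $\mathcal{H}$ whose vertices are the (at most $\binom{n}{q} = \bigO(n^q)$) $q$-cliques of $\mathcal{G}$, joining two of them by an edge iff they are disjoint and their union is a $2q$-clique in $\mathcal{G}$; then $t$-cliques of $\mathcal{H}$ correspond exactly to $r$-cliques of $\mathcal{G}$, and constructing $\mathcal{H}$ is combinatorial. Consequently a combinatorial $\bigO(N^{t-\epsilon})$-time $t$-clique algorithm on $N$-vertex graphs gives a combinatorial $\bigO(n^{\,r - \epsilon r/t})$-time $r$-clique algorithm. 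Applying this with $t = k'$ for any $k' \geq 3$ and $r = 4k'$ turns a hypothetical combinatorial $\bigO(n^{k'-\epsilon})$-time $k'$-clique algorithm into a combinatorial $\bigO(n^{\,4k' - 4\epsilon})$-time $4k'$-clique algorithm, contradicting \cref{kclique-conjecture} at $k = k'$; so \cref{kclique-conjecture} in turn implies the standard conjecture, and the two are equivalent.

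The main obstacle is intrinsic: none of these equivalent formulations admits an unconditional proof with current techniques, so the statement must remain a conjecture. What the equivalence above does accomplish --- and this is all that \cref{sec:lower-bound} needs --- is to certify that the $4k$-clique phrasing, chosen because the reduction to \probname{Word Break} most naturally produces clique sizes divisible by $4$, is a faithful, non-strengthening restatement of the commonly assumed combinatorial clique-hardness hypothesis.
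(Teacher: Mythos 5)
You are correct: the statement is a conjecture imported as a hardness hypothesis, and the paper gives no proof of it whatsoever, so treating it as an unprovable assumption is exactly the paper's own stance. Your added equivalence sketch via the standard $q$-clique blow-up reduction is a sound justification that the $4k$-clique phrasing is no stronger than the usual combinatorial clique conjecture, but it is supplementary material rather than something the paper attempts.
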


Consider any constant $k \geq 1$ and an undirected graph $\mathcal{G} = (V,E)$ with $V = \{1, \ldots, n\}$.
We assume that $k$, $\mathcal{G}$, $n$, and $N = n^k$ are fixed throughout this section.
The main idea of our reduction is to show that for every such $\mathcal{G}$,
there exists a string and a dictionary, denoted $w(\mathcal{G})$ and $\mathcal{D}(\mathcal{G})$,
such that $w(\mathcal{G})$ is highly compressible (and this compressed representation can be computed efficiently),
and $\mathcal{G}$ contains a $4k$-clique if and
only if $w(\mathcal{G})$ can be factorized into strings from $\mathcal{D}(\mathcal{G})$.

The structure of our reduction follows the lower bound for the
uncompressed variant of \probname{Word Break} presented
in~\cite{fastwordbreak}. Our reduction, however, differs in one major
way: The reduction in~\cite{fastwordbreak} constructs the instance of
\probname{Word Break} that enumerates all-but-two vertices in the clique, and
uses gadgets to add the two missing vertices. In our reduction, we
enumerate all $2k$-cliques, and then use gadgets to create two more
$k$-cliques that are connected to all other vertices. This changes the
construction and parameters of the output instance. Specifically, the
size of $\mathcal{D}(\mathcal{G})$ in our reduction is much larger, and depends on
$k$. In return, the output string $w(\mathcal{G})$ is highly compressible, which is
the property we need in our setting.

\paragraph{Local bi-clique gadget}

Assume we are given a $2k$-clique $A = \{a_1,a_2,\dots,a_{2k}\}$ and
we need to test if there exists a $k$-clique $B =
\{b_1,b_2,\dots,b_k\}$ in $\mathcal{G}$ such that $(A,B)$ is a bi-clique, i.e.,
$A \cap B = \emptyset$, and $a_i$ is adjacent to $b_j$ for all
$(i,j) \in [1 \dd 2k] \times [1 \dd k]$.

In our reduction, we identify (and use interchangeably) $k$-cliques in $\mathcal{G}$
with integers in $[1 \dd N]$. The string corresponding to the
$2k$-clique $A$ is:
\[
  w_A = \dol 1 2 \cdots N \hash a_1 \hash
             1 2 \cdots N \hash a_2 \hash \cdots
                          \hash a_{2k} \hash
             1 2 \cdots N \dol.
\]
For every $k$-clique $B$ and all vertices $i \in \{1, \dots, n\} \setminus A$
such that $i$ is adjacent to every node in $B$, we will add the string
$B (B+1) \cdots N \hash i \hash 1 2 \cdots (B-1)$ to our dictionary
where $B$ is also the unique integer corresponding to that
$k$-clique. We also add the strings $\dol1 2 \cdots (B-1)$ and $B
(B+1) \cdots N \dol$ for every $k$-clique $B$ to the dictionary.

If we wish to break $w_A$ into words from dictionary and assume that
the first string is $\dol 1 2 \cdots (B-1)$, then the only
way to continue is with a series of dictionary words of the form $B
(B+1) \cdots N \hash a_i \hash 1 2 \cdots (B-1)$. This implies that
the only possible candidates are the $k$-cliques $B$ adjacent to some
vertex $a_i$ in $A$. The ending is enforced to be the dictionary word
$B (B+1) \cdots N \dol$. Therefore, the string $w_A$ can be
partitioned into dictionary words if and only if there exists some $B$
such that $A \cup B$ is a $3k$-clique.

\paragraph{Local tri-clique gadget}

For the next step, we need to construct a way to extend a $2k$-clique
into a $4k$-clique. We start with a $2k$-clique $A =
\{a_1,a_2,\dots,a_{2k}\}$ and the string $w_A$ defined in the
previous paragraph.  We then construct the following string: $w_A' =
w_A \gamma w_A$. Next, for every pair of $k$-cliques $B,C \in [1 \dd
  N]$ such that $(B,C)$ is a bi-clique, along with all the dictionary
words defined in the previous paragraph, we also add the string $B
(B+1) \cdots N \dol \gamma \dol 1 2 \cdots (C-1)$ to the dictionary.

To break $w_A'$ into dictionary words, we need to find a
$k$-clique $B$ to partition the first half of $w_A'$ into dictionary
words, i.e., until $\hash a_{2k} \hash 1 2 \cdots (B-1)$. Then, we
need to find a $k$-clique $C$ such that $(B,C)$ is a
bi-clique. Therefore, we can continue only with the string $B (B+1)
\cdots N \dol \gamma \dol 1 2 \cdots (C-1)$. Finally, to partition the
remaining suffix, every vertex of $C$ must also be adjacent to every
node in $A$ due to the same reasoning as above.
Therefore, $(A,B,C)$ is a tri-clique of size $4k$ if and
only if $w_A'$ can be partitioned into substrings each belonging to
the dictionary.

\paragraph{Putting everything together}

We now combine the above gadgets for every $2k$-clique in the input
graph. To this end, we proceed similarly as in~\cite{fastwordbreak}, i.e.,
we replace every symbol in the gadgets above with 3
symbols. The purpose of this is to create three ``states'' during
the partition of the final word. These states correspond to the value
$i \bmod 3$, where $i$ is a position of the
last symbol of the currently parsed prefix.
Initially we will partition the string into elements of dictionary 
an offset of zero. Once we find the correct $2k$-clique, we will continue
matching with offset 1. Finally, we finish the correct $2k$-clique
gadget with an offset 2, and the partition of the suffix of the input
text finishes with offset 2. Formally, we proceed as follows.

\begin{definition}\label{def:w-and-D}
  The string $w(\mathcal{G})$ and a dictionary $\mathcal{D}(\mathcal{G})$ are defined as follows:

\begin{enumerate}[leftmargin=1cm]
  \item	For any string $s = s_1s_2 \cdots s_{\ell}$, let
    \begin{align*}
      f_0(s)
        &= \alpha s_1 \beta \alpha s_2 \beta \cdots \alpha s_{\ell} \beta,\\
      f_1(s)
        &= s_1 \beta \alpha s_2 \beta \alpha \cdots \alpha s_{\ell} \beta \alpha.
    \end{align*}
    For every $k$-clique $B \in [1 \dd N]$, add the following two
    strings to dictionary $\mathcal{D}(\mathcal{G})$: $f_1(\dol 1 2 \cdots (B-1))$
    and $f_1(B (B+1) \cdots N)$.
  \item For every $k$-clique $B$ and every $i \in \{1, \ldots, n\} \setminus B$ that is
    adjacent to all nodes in $B$, add the string
    $f_1(B(B+1) \cdots N \hash i \hash 1 2 3 \cdots (B-2)(B-1))$ to $\mathcal{D}(\mathcal{G})$.
  \item For every pair of $k$-cliques $B$ and $C$ such that $(B,C)$ is
    a bi-clique, add the following string to
    $\mathcal{D}(\mathcal{G})$:
    \[
      f_1(B(B+1) \cdots N \dol \gamma \dol 1 2 3 \cdots (C-2)(C-1)).
    \]
  \item For every $\sigma \in
    \{1,2,\cdots,N,\dol,\hash,\gamma,\mu\}$, add the strings
    $\alpha \sigma \beta$ and $\beta \alpha \sigma$ to
    $\mathcal{D}(\mathcal{G})$.
  \item The last set of strings added to $\mathcal{D}(\mathcal{G})$ is: $\alpha
    \mu \beta \alpha$, $\dol \beta \alpha \mu$, and $\beta \mu
    \mu$.
  \item For a $2k$-clique $A =
    \{a_1,a_2,\dots,a_{2k}\}$, let $w_A$ be the gadget defined
    earlier, i.e.,
    \[
      w_A =
        \dol 1 2 \cdots N \hash a_1 \hash
             1 2 \cdots N \hash a_2 \hash \cdots
                          \hash a_{2k} \hash
             1 2 \cdots N \dol.
    \]
    We then define
    \[
      w(\mathcal{G}) = \bigl(\textstyle\bigodot_{A}f_0(\mu w_A \gamma w_A \mu) \bigr)\mu \mu,
    \]
    where the concatenation is over all $2k$-cliques $A$.
\end{enumerate}
\end{definition}

\begin{lemma}\label{lm:equivalence}
  The string $w(\mathcal{G})$ can be partitioned into words from $\mathcal{D}(\mathcal{G})$ if and
  only if $\mathcal{G}$ contains a $4k$-clique.
\end{lemma}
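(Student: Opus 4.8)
The plan is to prove both directions of the equivalence by carefully tracking how a valid partition of $w(\mathcal{G})$ must proceed, using the ``offset mod 3'' bookkeeping that the $f_0/f_1$ encoding and the strings in items 4--5 of \cref{def:w-and-D} enforce. First I would establish the structural backbone: the only way to begin parsing $w(\mathcal{G})$ is with strings of the form $\alpha\sigma\beta$ and $\beta\alpha\sigma$ (item 4), which force the partition to stay ``in phase'' — each dictionary word consumes a block that is aligned to offset $0$ relative to the $\alpha\cdot\beta$ framing. I would show that while reading across an $f_0(\mu w_A \gamma w_A \mu)$ block at offset $0$, the parse is essentially forced to march symbol-by-symbol until it either (a) continues with offset-$0$ filler strings through the whole block, or (b) at some point switches into the ``$f_1$-encoded'' long dictionary words from items 1--3, which are exactly the images under the triple-symbol encoding of the bi-clique/tri-clique gadget words; crucially those long words are encoded so that they can only be used starting at offset $1$. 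The role of $\alpha\mu\beta\alpha$, $\dol\beta\alpha\mu$, and $\beta\mu\mu$ (item 5), together with the trailing $\mu\mu$, is to allow — and force — exactly one phase transition $0 \to 1$ at the start of the ``active'' gadget and one transition $1 \to 2$ at its end, and to absorb the final offset-$2$ tail cleanly; I would verify each of these three special strings does precisely that and nothing else.

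With the framing under control, the argument reduces to the uncompressed gadget analysis already sketched in the ``Local bi-clique gadget'' and ``Local tri-clique gadget'' paragraphs. For the ($\Leftarrow$) direction, suppose $\mathcal{G}$ has a $4k$-clique; partition it arbitrarily into three disjoint $k$-cliques $B$, $C$ and a $2k$-clique $A$ (with $A\cup B\cup C$ the $4k$-clique). I would exhibit the partition explicitly: parse all $f_0(\mu w_{A'}\gamma w_{A'}\mu)$ blocks for $2k$-cliques $A'$ preceding $A$ using only offset-$0$ filler (items 4--5); then at block $A$, use $\alpha\mu\beta\alpha$ to shift to offset $1$, traverse the first copy of $w_A$ using $f_1(\dol12\cdots(B-1))$, then repeated words $f_1(B(B+1)\cdots N\hash a_i\hash 12\cdots(B-1))$ (available because $B$ is adjacent to every $a_i\in A$), then $f_1(B(B+1)\cdots N\dol\gamma\dol12\cdots(C-1))$ to cross the $\gamma$ into the second $w_A$ copy (available because $(B,C)$ is a bi-clique), then the analogous $C$-indexed words (available because $C$ is adjacent to every $a_i\in A$), reaching the closing $\mu$ via $\dol\beta\alpha\mu$ which shifts to offset $2$; finally parse all remaining blocks and the trailing $\mu\mu$ using offset-$2$ filler and $\beta\mu\mu$. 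Checking that every word used is in $\mathcal{D}(\mathcal{G})$ and the concatenation is exactly $w(\mathcal{G})$ is routine.

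For the ($\Rightarrow$) direction, suppose $w(\mathcal{G})$ admits some partition $P$ into dictionary words. I would argue that $P$ must begin at offset $0$, and that offset-$0$ filler is the only option until some $f_0(\mu w_A\gamma w_A\mu)$ block where $P$ uses $\alpha\mu\beta\alpha$ to go to offset $1$ (this is the unique long/special word usable at offset $0$ on a $\mu$ symbol). Once at offset $1$ inside that block, the only dictionary words of the right length and content are the $f_1$-images from items 1--3, and chaining them through $w_A$ exactly as in the uncompressed gadget analysis forces the existence of a $k$-clique $B$ adjacent to all of $A$, then a $k$-clique $C$ with $(B,C)$ a bi-clique (to legally cross the central $\gamma$ via the item-3 word), then that $C$ is adjacent to all of $A$; hence $A\cup B\cup C$ is a $4k$-clique. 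I would also check that once $P$ exits this block at offset $2$ (forced by $\dol\beta\alpha\mu$), it can and must finish with offset-$2$ filler and $\beta\mu\mu$, and that no second phase transition can occur — so exactly one gadget is ``activated.''

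The main obstacle is the meticulous case analysis showing the phase structure is \emph{rigid}: that at each offset only the intended category of dictionary words can extend the current prefix, that the special strings in items 4--5 cannot be misused to create spurious extra transitions or to skip part of a block, and that the three-symbol encoding $f_0,f_1$ together with the $\alpha,\beta$ separators genuinely prevents any dictionary word from straddling block boundaries in an unintended way. This is where the bulk of the proof effort lies; the clique-combinatorics content itself is inherited almost verbatim from the gadget descriptions and from~\cite{fastwordbreak}.
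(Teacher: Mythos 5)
Your proposal is correct and follows essentially the same route as the paper's proof: the explicit offset-0 $\to$ 1 $\to$ 2 phase argument with the filler words $\alpha\sigma\beta$ and $\beta\alpha\sigma$, the transition words $\alpha\mu\beta\alpha$, $\dol\beta\alpha\mu$, $\beta\mu\mu$, and the chaining of $f_1$-encoded gadget words to extract $B$ and $C$. The only cosmetic difference is that you trace the forced partition left-to-right in the ($\Rightarrow$) direction while the paper works right-to-left from the terminal $\beta\mu\mu$; both hinge on the same rigidity of the phase structure that you correctly identify as the bulk of the work.
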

\begin{proof}
  ($\Leftarrow$) Assume that $\mathcal{G}$ contains a $4k$-clique.
  Let $A \cup B \cup C$ be a partition of the vertices in this clique,
  where $|A| = 2k$ and $|B|=|C|=k$. By \cref{def:w-and-D},
  $w(\mathcal{G})$ contains $w' = f_0(\mu w_A
  \gamma w_A \mu)$ as a substring.
  We begin by partitioning the prefix of $w(\mathcal{G})$ preceding the
  substring $w'$ using words of the form $\alpha \sigma \beta$.
  The word $\alpha \mu \beta \alpha$ is then used to change the offset
  to one and match the prefix of $w'$. The next element of the partition is $f_1(\dol
  1 2 3 \cdots (B-2)(B-1))$. For every $a \in A$, we then match the next
  $2k$ substrings of $w'$ using $f_1(B(B+1)\cdots N \hash a \hash 1 2 3
  \cdots (B-2)(B-1))$. Next, we include the $\gamma$ symbol in $w'$ by
  using the word $f_1(B(B+1) \cdots N \dol \gamma \dol 1 2 3
  \cdots(C-2)(C-1))$. The second part of $w'$ is completed by
  repeatedly using $f_1(C(C+1)\cdots N \hash a \hash 1 2 \cdots
  (C-1))$. We finish partitioning $w'$ by using $f_1(C(C+1) \cdots N)$ followed by
  $\dol \beta \alpha \mu$. Now the offset is two and we can complete
  the partitioning using strings
  $\beta \alpha \sigma$, followed
  by $\beta \mu \mu$.
	
  ($\Rightarrow$) Assume that $w(\mathcal{G})$ can be broken into words of
  $\mathcal{D}(\mathcal{G})$. The last word used is $\beta \mu \mu$ preceded by some
  number of words of the form $\beta \alpha \sigma$. Following the partition
  right to left, the very first word not of the form $\beta \alpha
  \sigma$ must be $\dol \beta \alpha \mu$. After including this string
  in the partition, the remaining prefix of
  $w(\mathcal{G})$ ends with the string $f_0(\mu w_A \gamma w_A \mu)$, with
  the last five symbols removed, and $A$ is an
  integer corresponding to some $2k$-clique. The offset has now
  changed from two to one. The preceding word here must be $f_1(C(C+1)
  \cdots N)$ where $C$ is an integer corresponding to some
  $k$-clique. The next $2k$ words must of the form
  $f_1(C(C+1) \cdots N \hash a \hash 1 2 \cdots (C-2)(C-1))$. By
  Step~2 of the procedure above, $a$ is adjacent to every vertex in
  $C$. Thus, $(A,C)$ forms
  a bi-clique. Now, the only way to match $\gamma$ is using the word
  $f_1(B(B+1) \cdots N \dol \gamma \dol 1 2 \cdots (C-1))$ for some
  integer $B$ corresponding to a $k$-clique. By Step~3, $(B,C)$ forms
  a bi-clique. The next $2k$ words are
  of the form $f_1(B(B+1) \cdots N \hash a \hash 1 2 \cdots
  (B-2)(B-1))$, where $a$ is a vertex in the $2k$
  clique $A$. Again, by Step~2, every such $a$ is
  adjacent to all of $B$. Thus, $(A,B)$ forms a bi-clique. Since each
  of the pairs $(A,B),(B,C)$ and $(A,C)$ are bi-cliques, we
  conclude that $(A \cup B \cup C)$ forms a $4k$-clique. The remaining
  prefix of $w(\mathcal{G})$ must be matched by an $\alpha \mu \beta \alpha$
  (switching the offset from 1 to 0), and then preceded by strings
  of the form $\alpha \sigma \beta$ (which all end at offset 0).
\end{proof}

\begin{lemma}\label{lm:D-properties}
  It holds $\sum_{D \in \mathcal{D}(\mathcal{G})}|D| = \bigO(N^3)$
  and $\max_{D \in \mathcal{D}(\mathcal{G})} |D| = \bigO(N)$.
\end{lemma}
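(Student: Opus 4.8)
The plan is to walk through the six items of \cref{def:w-and-D} one at a time, bounding for each the number of strings it contributes to $\mathcal{D}(\mathcal{G})$ and the maximum length of such a string, and then add everything up. Two preliminary observations make this routine. First, $|f_0(s)| = |f_1(s)| = 3|s|$ for every string $s$ (each original symbol is expanded into exactly three symbols), so applying $f_0$ or $f_1$ inflates lengths by only a constant factor. Second, since the reduction identifies $k$-cliques of $\mathcal{G}$ with integers in $[1 \dd N]$, and indeed the number of $k$-cliques is at most $\binom{n}{k} \le n^k = N$, every ``for every $k$-clique'' loop ranges over $\bigO(N)$ objects, and every ``for every pair of $k$-cliques'' loop over $\bigO(N^2)$ objects.

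Next I would carry out the per-item accounting. In items 1--3 every dictionary word is $f_1$ applied to a string of one of the forms $\dol 1 2 \cdots (B-1)$, $B(B+1)\cdots N$, $B(B+1)\cdots N \hash i \hash 1 2 \cdots (B-1)$, or $B(B+1)\cdots N \dol \gamma \dol 1 2 \cdots (C-1)$; a direct symbol count shows each of these has length at most $2N + \bigO(1)$, so after applying $f_1$ every such word has length $\bigO(N)$. Item 1 adds two words per $k$-clique, hence $\bigO(N)$ words of length $\bigO(N)$, contributing $\bigO(N^2)$. Item 2 adds at most one word per pair $(B,i)$ with $B$ a $k$-clique and $i \in \{1,\dots,n\}$, hence $\bigO(N\cdot n) = \bigO(N^2)$ words (using $n \le n^k = N$) of length $\bigO(N)$, contributing $\bigO(N^3)$. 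Item 3 adds one word per bi-clique pair $(B,C)$, hence $\bigO(N^2)$ words of length $\bigO(N)$, contributing $\bigO(N^3)$. Item 4 adds $2(N+4)$ words of length $3$, contributing $\bigO(N)$, and item 5 adds three words of constant length. Summing over all items, the $\bigO(N^3)$ contributions of items 2 and 3 dominate, so $\sum_{D \in \mathcal{D}(\mathcal{G})}|D| = \bigO(N^3)$, and since no individual word exceeds length $\bigO(N)$, we also get $\max_{D \in \mathcal{D}(\mathcal{G})}|D| = \bigO(N)$.

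I do not expect a genuine obstacle here; the whole argument is bookkeeping. The two places that need a moment of care are (i) the bound of $N$ on the number of $k$-cliques — this follows from the identification with $[1 \dd N]$ used throughout the reduction — and (ii) verifying that the strings fed into $f_1$ in items 2 and 3 are of length $\bigO(N)$ rather than something larger: for item 2, $B(B+1)\cdots N$ has $N-B+1$ symbols, the two $\hash$'s and the vertex $i$ add $3$, and $1 2 \cdots (B-1)$ adds $B-1$, for a total of $N+3$, and item 3 is analogous with an extra $\dol\gamma\dol$ block. Once these two points are pinned down, the stated bounds fall out immediately.
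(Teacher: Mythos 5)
Your proposal is correct and follows essentially the same item-by-item bookkeeping as the paper's proof: bound the count and length of the words contributed by each step of \cref{def:w-and-D} and sum. The only (immaterial) divergence is in Step~2, where you bound the number of words by $\bigO(N\cdot n)=\bigO(N^2)$ while the paper states $\bigO(N\cdot k)$; either way Step~3 dominates and both yield $\sum_{D}|D|=\bigO(N^3)$ and $\max_{D}|D|=\bigO(N)$.
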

\begin{proof}
  Recall the construction presented in \cref{def:w-and-D}.
  Step~1 adds $\bigO(N)$ strings each of length
  $\bigO(N)$. The second step inserts $\bigO(N\cdot k)$ words into
  $\mathcal{D}(\mathcal{G})$, each of length $\bigO(N)$. Step~3
  adds $\bigO(N^2)$ strings to $\mathcal{D}(\mathcal{G})$, each of length
  $\bigO(N)$. Step~4 adds
  $\bigO(N)$ strings, each of length $3$, to $\mathcal{D}(\mathcal{G})$. Step~5
  contributes $\bigO(1)$ to the total length. We thus obtain
  $\sum_{D \in \mathcal{D}(\mathcal{G})}|D| = \bigO(N^3)$
  and $\max_{D \in \mathcal{D}(\mathcal{G})} |D| = \bigO(N)$.
\end{proof}

\begin{lemma}\label{lm:slp}
  There exists an SLP $G$ such that $L(G) = \{w(\mathcal{G})\}$
  and $|G| = \bigO(N^2)$. Moreover, given $\mathcal{G}$, we
  can construct $G$ in $\bigO(N^2)$ time.
\end{lemma}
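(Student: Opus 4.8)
The plan is to construct $G$ first as a straight-line grammar (SLG) whose rules may have long right-hand sides, and then apply the standard SLG-to-SLP transformation recalled in \cref{sec:prelim}, which preserves the language and the size up to a constant factor. The construction rests on two simple observations. First, $f_0$ is a string homomorphism: it replaces each symbol $\sigma$ by the length-$3$ word $\alpha\sigma\beta$, so $f_0(st) = f_0(s)\,f_0(t)$ for all strings $s,t$. Second, writing $S_N$ for the fixed string $1\,2\,\cdots\,N$, every gadget string
\[
  w_A = \dol\, S_N\, \hash\, a_1\, \hash\, S_N\, \hash\, a_2\, \hash\, \cdots\, \hash\, a_{2k}\, \hash\, S_N\, \dol
\]
is obtained from a single clique-independent template by substituting only the $2k$ vertex symbols $a_1,\dots,a_{2k}$; in particular, all $2k+1$ occurrences of the length-$N$ block $S_N$ inside $w_A$ are identical, and identical across all $A$. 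Representing $S_N$ once and reusing it is what keeps the grammar small.

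Concretely, I would proceed as follows. (i) Introduce a nonterminal $Y$ with $\rhs{Y} = \alpha\,1\,\beta\,\alpha\,2\,\beta\cdots\alpha\,N\,\beta$, i.e., a single rule of length $3N$; by the homomorphism property $\exp{Y} = f_0(S_N)$. (ii) Enumerate all $2k$-cliques $A$ of $\mathcal{G}$; there are at most $\binom{n}{2k}\le n^{2k} = N^2$ of them, and they can all be listed in $\bigO(n^{2k}) = \bigO(N^2)$ time by brute-force subset enumeration (each subset checked in $\bigO(k^2) = \bigO(1)$ time). (iii) For every such $A = \{a_1,\dots,a_{2k}\}$ introduce $W_A$ whose right-hand side is the concatenation of $(\alpha\dol\beta)$, then of the block $Y\,(\alpha\hash\beta)(\alpha a_i\beta)(\alpha\hash\beta)$ for $i = 1,\dots,2k$, then of $Y$, and finally of $(\alpha\dol\beta)$; again by the homomorphism property $\exp{W_A} = f_0(w_A)$, and $|\rhs{W_A}| = \bigO(k)$. (iv) Introduce $W'_A$ with
\[
  \rhs{W'_A} = (\alpha\mu\beta)\,W_A\,(\alpha\gamma\beta)\,W_A\,(\alpha\mu\beta),
\]
so that $\exp{W'_A} = f_0(\mu w_A \gamma w_A \mu)$ and $|\rhs{W'_A}| = \bigO(1)$. (v) Finally let the start symbol $S$ have right-hand side $\bigl(\bigodot_A W'_A\bigr)\mu\mu$, which by \cref{def:w-and-D} expands exactly to $w(\mathcal{G})$; its length is $\bigO(N^2)$.

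Adding up, $Y$ contributes $\bigO(N)$, the $\bigO(N^2)$ nonterminals $W_A$ contribute $\bigO(k)$ each, each $W'_A$ contributes $\bigO(1)$, and $S$ contributes $\bigO(N^2)$; since $k = \bigO(1)$, the SLG has size $\bigO(N^2)$ and is produced in $\bigO(N^2)$ time (dominated by the clique enumeration and by writing out the right-hand side of $S$). Converting it to an SLP via \cref{sec:prelim} preserves the language and, up to a constant factor, the size, yielding the claimed $G$. The only point that requires care is the size bookkeeping: writing each of the $\bigO(N^2)$ gadgets out independently would cost $\Theta(N^3)$, so it is essential that the giant repeated block $S_N$ is shared through the single nonterminal $Y$, which is precisely what drops the per-gadget cost to $\bigO(k)$; once that is in place, the rest is routine.
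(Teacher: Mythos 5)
Your proposal is correct and follows essentially the same route as the paper: a single shared nonterminal for the block $1\,2\cdots N$ (your $Y$, the paper's $U$), $\bigO(k)$ nonterminals per $2k$-clique gadget exploiting that $f_0$ is a homomorphism, a top-level concatenation of $\bigO(N^2)$ gadget nonterminals, and a final SLG-to-SLP conversion. Your write-up is merely more explicit than the paper's about the homomorphism property of $f_0$ and about why sharing the $S_N$ block is what avoids the $\Theta(N^3)$ cost.
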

\begin{proof}

  Let $U$ be a nonterminal with $\exp{U} = 1 2 \cdots N$. For any
  $2k$-clique $A = \{a_1,a_2,\dots,a_{2k}\}$, $w_A = \dol \exp{U}
  \hash a_1 \hash \exp{U} \hash a_2 \hash \dots \hash a_{2k} \hash
  \exp{U} \dol$. Thus, $w_A$ can be described using $\bigO(k)$
  nonterminals. Furthermore, $f_0(\mu w_A \gamma w_A \mu)$ can also be
  described using $\bigO(k)$ nonterminals since all $f_0(\cdot)$ does is add
  alternating $\beta \alpha$ between every letter of a string. Since
  there are $\bigO(N^2)$ different $2k$-cliques in $\mathcal{G}$,
  $w(\mathcal{G})$ can be described as a concatenation of $\bigO(k \cdot N^2)$
  nonterminals. By converting the grammar into an SLP, we
  conclude that $w(\mathcal{G})$ is generated by an SLP of size $\bigO(N^2)$.

  The above SLP is easily constructed in
  $\bigO(N^2 \cdot k^2) = \bigO(N^2)$ time.
\end{proof}

\begin{lemma}\label{lm:construction}
  Given $\mathcal{G}$, we can construct $\mathcal{D}(\mathcal{G})$ in
  $\bigO(N^3)$ time.
\end{lemma}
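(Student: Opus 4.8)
The plan is to build $\mathcal{D}(\mathcal{G})$ by directly executing Steps~1--5 of \cref{def:w-and-D} and to observe that the running time is dominated by the cost of writing the output, which is $\sum_{D \in \mathcal{D}(\mathcal{G})}|D| = \bigO(N^3)$ by \cref{lm:D-properties}. First I would set up two auxiliary structures. The first is the $n \times n$ adjacency matrix of $\mathcal{G}$, built from the input in $\bigO(n^2)$ time (which is $\bigO(N^2)$, since $N = n^k$ with $k \geq 1$ forces $n \leq N$) and answering adjacency queries in $\bigO(1)$ time. The second is the fixed bijection between $[1 \dd N]$ and the set of $k$-tuples of vertices of $\mathcal{G}$, which exists because $N = n^k$: given $x \in [1 \dd N]$, the $k$ base-$n$ digits of $x-1$ are computed in $\bigO(k) = \bigO(1)$ time and yield the $k$ vertices encoded by $x$. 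Using the adjacency matrix on these decoded vertices, we can test in $\bigO(k^2) = \bigO(1)$ time whether a given integer encodes a $k$-clique, whether a vertex $i$ is adjacent to every vertex of a $k$-clique $B$, and whether an ordered pair $(B,C)$ of $k$-cliques is a bi-clique. Finally, I would note that the maps $f_0$ and $f_1$ merely triple the length of their argument and can be evaluated in time linear in the output.

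With these in place, I would carry out the enumeration one step at a time. In Step~1, we iterate over the $\bigO(N)$ integers, retain those that encode $k$-cliques, and for each such $k$-clique $B$ write out the two strings $f_1(\dol 1 2 \cdots (B-1))$ and $f_1(B(B+1)\cdots N)$, each of length $\bigO(N)$; this costs $\bigO(N^2)$ in total. In Step~2, we iterate over the $\bigO(N)$ $k$-cliques $B$ and, for each, over the $n$ vertices $i \notin B$, writing out the corresponding string of length $\bigO(N)$ whenever $i$ is adjacent to all of $B$; here the iteration overhead is $\bigO(Nn) = \bigO(N^2)$ and the writing cost is $\bigO(N)$ times the number of strings produced, hence $\bigO(N^3)$. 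In Step~3, we iterate over all $\bigO(N^2)$ ordered pairs $(B,C)$ of $k$-cliques, test the bi-clique condition in $\bigO(1)$ time, and write out the corresponding length-$\bigO(N)$ string when it holds; the overhead is $\bigO(N^2)$ and the writing cost is again $\bigO(N^3)$. Steps~4 and~5 produce $\bigO(N)$ and $\bigO(1)$ strings of constant length, in $\bigO(N)$ time.

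Summing over all steps, the iteration and decision overhead is $\bigO(N^2)$, while the cost of writing the strings equals the total length of $\mathcal{D}(\mathcal{G})$, which is $\bigO(N^3)$ by \cref{lm:D-properties}; hence the construction runs in $\bigO(N^3)$ time. The one point that needs care is organizing the enumerations---in particular the $\bigO(N^2)$ bi-clique candidates in Step~3 and the $\bigO(Nn)$ clique--vertex candidates in Step~2---so that each candidate is examined with only $\bigO(1)$ overhead, which the adjacency matrix together with the $\bigO(1)$-time decoding of integers into vertex tuples makes possible. Beyond that, the argument is routine bookkeeping, and the claimed time bound follows immediately once the bound on $\sum_{D \in \mathcal{D}(\mathcal{G})}|D|$ from \cref{lm:D-properties} is in hand.
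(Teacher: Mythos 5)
Your proposal is correct and follows essentially the same route as the paper: enumerate each of Steps~1--5, charging $\bigO(1)$ (i.e., $\bigO(k^2)$) per candidate for the clique/adjacency tests and letting the output-writing cost, bounded by $\sum_{D \in \mathcal{D}(\mathcal{G})}|D| = \bigO(N^3)$, dominate. The paper's proof is just a terser version of the same accounting, arriving at $\bigO(n^{3k} + n^{2k}\cdot k^2) = \bigO(N^3)$.
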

\begin{proof}
  In the first step of \cref{def:w-and-D}, for every $k$-clique $B$,
  we add the string of length $n^k$ to
  $\mathcal{D}(\mathcal{G})$. Enumerating all $k$-cliques in an
  $n$-vertex graph takes $\bigO(n^k \cdot k^2)$ time.  Thus, the first
  step in total takes $\bigO(n^k \cdot k^2 + n^{2k})$ time.  By a
  similar argument, the second step takes $\bigO(n^{k+1} \cdot k^2 +
  n^{2k+1})$, and the third step takes $\bigO(n^{2k} \cdot k^2 +
  n^{3k})$ time. Step~4 takes $\bigO(n^k)$ time, and Step~5 takes
  $\bigO(1)$ time.  In total, the construction of
  $\mathcal{D}(\mathcal{G})$ takes $\bigO(n^{3k} + n^{2k} \cdot k^2) =
  \bigO(n^{3k}) = \bigO(N^3)$ time (recall, that $k = \bigO(1)$).
\end{proof}

\begin{theorem}\label{th:lower-bound}
  Consider a string $u$ and let $G = (V,\Sigma,R,S)$ be an SLP with
  $L(G) = \{u\}$.  Let $\mathcal{D}$ be a dictionary, and let $M =
  \sum_{D \in \mathcal{D}} |D|$ and $m = \max_{D \in \mathcal{D}}
  |D|$.  Assuming the Combinatorial $k$-Clique Conjecture, there is no
  combinatorial algorithm that, given $G$ and $\mathcal{D}$, solves
  \probname{Word Break} for $(u,\mathcal{D})$ in $\bigO(|G| \cdot m^{2
  - \epsilon} + M)$ time, where $\epsilon > 0$.
\end{theorem}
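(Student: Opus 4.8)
The plan is to turn the gadget construction of \cref{def:w-and-D} into a fine-grained reduction from the $4k$-clique problem and argue by contradiction, treating \cref{lm:equivalence,lm:D-properties,lm:slp,lm:construction} as black boxes. Suppose, for the sake of contradiction, that for some $\epsilon > 0$ there is a combinatorial algorithm $\mathcal{A}$ that, given an SLP $G$ of a string $u$ together with a dictionary $\mathcal{D}$, solves \probname{Word Break} for $(u,\mathcal{D})$ in $\bigO(|G| \cdot m^{2-\epsilon} + M)$ time, where $m = \max_{D \in \mathcal{D}}|D|$ and $M = \sum_{D \in \mathcal{D}}|D|$. Fix an arbitrary constant $k \geq 1$ and an $n$-vertex graph $\mathcal{G}$, and let $N = n^k$ as in the rest of \cref{sec:lower-bound}.

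First I would invoke \cref{lm:slp} to construct, in $\bigO(N^2)$ time, an SLP $G$ with $L(G) = \{w(\mathcal{G})\}$ and $|G| = \bigO(N^2)$, and \cref{lm:construction} to construct $\mathcal{D}(\mathcal{G})$ in $\bigO(N^3)$ time; by \cref{lm:D-properties} this dictionary satisfies $M = \bigO(N^3)$ and $m = \bigO(N)$. I would then run $\mathcal{A}$ on $(G,\mathcal{D}(\mathcal{G}))$ and return its answer. By \cref{lm:equivalence}, the answer is affirmative exactly when $\mathcal{G}$ contains a $4k$-clique, so the combined procedure decides the $4k$-clique problem; and since the reduction amounts only to enumerating cliques and concatenating strings, it is combinatorial, hence so is the combined procedure. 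For the running time, we may assume $\epsilon \le 1$ (proving the claim for small $\epsilon$ implies it for every larger $\epsilon$, since a faster algorithm would also meet the weaker bound), so $m^{2-\epsilon} = \bigO(N^{2-\epsilon})$ and $\mathcal{A}$ runs in $\bigO(|G| \cdot m^{2-\epsilon} + M) = \bigO(N^2 \cdot N^{2-\epsilon} + N^3) = \bigO(N^{4-\epsilon})$ time, where $N^3 = \bigO(N^{4-\epsilon})$ because $\epsilon \le 1$. Adding the $\bigO(N^3)$ construction cost keeps the total at $\bigO(N^{4-\epsilon})$. Since $N = n^k$, this is $\bigO(n^{k(4-\epsilon)}) = \bigO(n^{4k - k\epsilon})$, and as $k\epsilon$ is a positive constant this contradicts the Combinatorial $k$-Clique Conjecture (\cref{kclique-conjecture}).

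I do not expect a genuine obstacle: the combinatorial core is already contained in \cref{lm:equivalence} (correctness of the bi-clique and tri-clique gadgets) and in the size and time bounds of \cref{lm:slp,lm:D-properties,lm:construction}, so what remains is only exponent bookkeeping. The one point requiring care is verifying that the two $\bigO(N^3)$ contributions — the dictionary-construction time and the additive $M$ term in $\mathcal{A}$'s bound — are absorbed into a bound of the form $n^{4k - \Omega(1)}$; this is exactly where the $N^3 = N^{4-1}$ slack is spent, and it is also the reason the reduction rules out a $+M$ additive term but could not, by this route, handle one as large as $+M^{4/3}$.
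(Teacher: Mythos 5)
Your proposal is correct and matches the paper's proof essentially step for step: construct $G$ and $\mathcal{D}(\mathcal{G})$ via \cref{lm:slp,lm:construction}, bound $m_{\mathcal{D}}$ and $M_{\mathcal{D}}$ via \cref{lm:D-properties}, run the hypothetical algorithm, and conclude via \cref{lm:equivalence}. Your exponent bookkeeping $N^{4-\epsilon} = n^{4k-k\epsilon}$ is in fact slightly more careful than the paper's, which writes $n^{4k-4\epsilon}$ and $\epsilon'=4\epsilon$ where $\epsilon'=k\epsilon$ is what the arithmetic gives; either way the conclusion stands.
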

\begin{proof}
  Suppose that this is not true.
  We will show that this implies that we can check if $\mathcal{G}$
  contains a $4k$-clique in $\bigO(n^{4k-\epsilon'})$ time for some
  constant $\epsilon'>0$ (i.e., \cref{kclique-conjecture} is not
  true).
  The algorithm works as follows:
  \begin{enumerate}[leftmargin=1cm]
  \item Using \cref{lm:slp,lm:construction}, in $\bigO(N^3)$ time we
    construct $\mathcal{D}(\mathcal{G})$ and an SLP $G$ of size $|G| =
    \bigO(N^2)$ satisfying $L(G) = \{w(\mathcal{G})\}$.
  \item Using the hypothetical algorithm, we check if
    $w(\mathcal{G})$ can be partitioned into words from
    $\mathcal{D}(\mathcal{G})$. To bound its runtime, recall that
    by \cref{lm:D-properties}, the values
    $M_{\mathcal{D}} = \sum_{D \in \mathcal{D}(\mathcal{G})}|D|$ and
    $m_{\mathcal{D}} = \max_{D \in \mathcal{D}(\mathcal{G})}|D|$ satisfy
    $M_{\mathcal{D}} = O(N^3)$ and $m_{\mathcal{D}} = \bigO(N)$.
    Thus, the algorithm takes 
    $\bigO(|G| \cdot m_{\mathcal{D}}^{2-\epsilon} + M_{\mathcal{D}}) =
    \bigO(N^2 \cdot N^{2-\epsilon} + N^3)$ time.
  \end{enumerate}
  In total, we spend $\bigO(N^{4-\epsilon} + N^3) =
  \bigO(n^{4k - 4\epsilon} + n^{3k})$ time
  to determine if $w(\mathcal{G})$ can be partitioned into words from
  $\mathcal{D}(\mathcal{G})$. By \cref{lm:equivalence}, this is equivalent to
  checking if $\mathcal{G}$ contains a $4k$-clique. 
  Thus, we obtain the above claim with constant
  $\epsilon' = 4\epsilon$.
\end{proof}

\Section{References}
\bibliographystyle{IEEEbib}
\bibliography{paper}

\end{document}